\newtheorem{theorem}{Theorem}[section]
\newtheorem{corollary}[theorem]{Corollary}
\newtheorem{axiom}[theorem]{Axiom}
\newcommand{\set}[1]{\{#1\}}
\newcommand{\R}{\mathbb{R}}
\newcommand{\lastproc}{n}
\newcommand{\lastitem}{380}
\newcommand{\procs}{1, \ldots, \lastproc}
\newcommand{\setprocs}{\set{\procs}}
\newcommand{\cI}{{\mathcal{I}}}
\newcommand{\cP}{{\mathcal{P}}}
\newcommand{\procinput}[1]{\mathbb{I}({#1})}
\newcommand{\protinput}[1]{\mathbb{I}({#1})}
\newcommand{\algorithminput}{\mathbb{I}}
\newcommand{\universeinputs}{\mathbb{I}^*}
\newcommand{\algorithmoutput}{\mathbb{O}}
\newcommand{\sameorder}[4]{\mathbf{SameOrder}({#1},{#2},{#3} \sim {#4})}
\newcommand{\samevalue}[4]{\mathbf{SameValue}({#1},{#2},{#3} \sim {#4})}
\newcommand{\orderHigher}[3]{{#2} \succ_{#1} {#3}}
\newcommand{\orderLower}[3]{{#2} \prec_{#1} {#3}}
\newcommand{\orderEqual}[3]{{#2} =_{#1} {#3}}
\newcommand{\orderHigherEqual}[3]{{#2} \succeq_{#1} {#3}}
\newcommand{\orderLowerEqual}[3]{{#2} \preceq_{#1} {#3}}
\newcommand{\inst}[2]{{\mathit{Inst}_{{#1}}({#2})}}
\title{Thermodynamic Hypothesis as Social Choice: An Impossibility Theorem for Protein Folding}
\author{Hammurabi Mendes, Sorin Istrail\\
\texttt{\{hmendes,sorin\}}@cs.brown.edu\\
Brown University}
\date{}
\begin{document}

\maketitle

\begin{abstract}
Protein Folding is concerned with the reasons and mechanism behind a protein's tertiary structure. The thermodynamic hypothesis of Anfinsen postulates an universal energy function (UEF) characterizing the tertiary structure, defined consistently across proteins, in terms of their aminoacid sequence.

We consider the approach of examining multiple protein structure descriptors in the PDB (Protein Data Bank), and infer individual preferences, biases favoring particular classes of aminoacid interactions in each of them, later aggregating these individual preferences into a global preference. This 2-step process would ideally expose intrinsic biases on classes of aminoacid interactions in the UEF itself. The intuition is that any intrinsic biases in the UEF are expressed within each protein in a specific manner consistent with its specific aminoacid sequence, size, and fold (consistently with Anfinsen's thermodynamic hypothesis), making a 1-step, holistic aggregation less desirable.

Our intention is to illustrate how some impossibility results from voting theory would apply in this setting, being possibly applicable to other protein folding problems as well. We consider concepts and results from voting theory and unveil methodological difficulties for the approach mentioned above. With our observations, we intend to highlight how key theoretical barriers, already exposed by economists, can be relevant for the development of new methods, new algorithms, for problems related to protein folding.
\end{abstract}

\newpage

\section{Introduction}
\label{Sec-Introduction}

The systematic aggregation of individual preferences into a global social preference is the principal concern of social choice theory. Its origins trace back to the 13th-century, when the philosopher Ramon Lull considered a voting method that counted pairwise preferences among candidates. This technique was later recovered by the famous 18th-century mathematicians Borda and Condorcet. Later, in 1951, Kenneth Arrow stated his famous impossibility theorem (\cite{arrow}, for the version of 1963), which is viewed as a landmark of social choice theory. This important result defines apparently innocuous properties that certain preference aggregation functions (called social welfare functions) should respect, but also demonstrates that such properties are conflicting.

Many other impossibility results followed. The conclusions usually result from an axiomatic setting, where aggregation functions are modeled after axioms defining their intended behavior. Also common is the relative unconcern for the elements subject to preferences. Many aggregation settings, in addition, do \textbf{not} consider voters as players, but merely as providers of information -- and we are interested on those settings \textbf{only}. Note that the observations above suggest a potential application of the impossibility results to other circumstances that involve aggregation. In this work, we model a problem related to protein folding as an aggregation function prone to those impossibility results, exploring the connection between these ideas.

The protein folding problem is basically interested on how and why proteins settle in particular three-dimensional structures. The well-accepted \emph{thermodynamic hypothesis} of Anfinsen \cite{Anfinsen} states that the protein structure, informally speaking, is a unique, stable, and viable state depending solely on \emph{aminoacid sequence}. The state minimizes an energy function accounting for physiochemical utilities. Such \emph{universal energy function} (UEF) is defined consistently across proteins, in terms of the particular aminoacid interactions that define each of them.


The protein folding literature often takes a \emph{holistic} approach to analyzing the PDB~\cite{PDB} in order to obtain certain conclusions from stable folds. We define a problem related to protein folding where a 2-step aggregation process will seem more intuitive: suppose we intend to infer hypothesized intrinsic preferences on the UEF, which would favor certain classes of aminoacid interactions over others. The idea is to first aggregate multiple \emph{individual preferences}, and later consolidate them into a single \emph{global preference}. In other words, we consider and explore the following scenario:
\begin{itemize}
	\item Say that the UEF has an intrinsic \emph{global preference} among interaction classes. For instance, the UEF might generally favor $(A,V)$, the interaction of the aminoacids alanine and valine, over $(G,L)$, the interaction of aminoacids glycine and leucine.
	\item This global preference is manifested distinctly across proteins, in light of the thermodynamic hypothesis. In other words, a protein's stable fold implies an \emph{individual preference} among interaction classes, which results from its corresponding aminoacid sequence and the global preference intrinsic to the UEF.
\end{itemize}

We consider the possibility of inferring or estimating the postulated \emph{global preference} among interaction classes, intrinsic to the UEF, by aggregating \emph{individual preferences}, extracted separately from multiple stable folds in the PDB. We assess and qualify aminoacid interactions inside each protein in order to generate the individual preferences. Importantly, we assume that aminoacid interactions are totally defined by \emph{pairwise interactions} (see Sec.~\ref{Sec-PairwiseInteractions}). The existence and expression of a pairwise interaction depends on the aminoacids involved (their physiochemical affinities) and their spacial arrangement (their distance and contact).

We particularly think that such 2-step aggregation process is particularly sensible in light of Anfinsen's thermodynamic hypothesis. The idea is that individual proteins may express universal biases differently, according to their particular aminoacid sequence, size, and fold (note here the thermodynamic hypothesis). If we holistically analyze the PDB, performing instead a 1-step aggregation process, we effectively aggregate over aminoacid interactions from multiple proteins. By assumption, those aminoacid interactions may express biases differently. This argument might as well be applicable to other problems related to protein folding where the thermodynamic hypothesis is considered.

The problem formalized before is formally \emph{weaker} than those aiming for more complete characterizations of the UEF with our 2-step aggregation process and the same axiomatic setting. Here, while we do not intend to completely characterize the UEF, we do intend to identify intrinsic biases among interaction classes in the UEF. Unfortunately, this weaker problem will still manifest an impossibility in light of some seemingly reasonable assumptions (discussed in Sec.~\ref{Sec-AnalyticalFramework} and Sec.~\ref{Sec-ExternalAggregation}). Stronger characterizations of the UEF via the method above will suffer from similar methodological limitations unless axioms are relaxed, made more specific, or otherwise changed appropriately.

Our intention and contribution is to highlight theoretical impossibility results for the problem considered above, under reasonable assumptions (see Sec.~\ref{Sec-ExternalAggregation}). We use concepts and results from voting theory, a distinct investigation field of social choice theory, in Economics. The goal is to sparkle interest in the profound results from voting theory, and highlight a connection of this area with methods that, so far, have been mainly a concern of biologists, chemists, physicists, and computer scientists.
We note that the remarks made here result from a suitable axiomatic framework, from the application of voting theory concepts, and from direct reduction to previous results in this area. We intend to suggest how some key theoretical barriers, already exposed by economists, can be relevant for the development of new methods, new algorithms, for problems related to protein folding, particularly those in which the thermodynamic hypothesis is considered.

\section{The Analytic Framework}
\label{Sec-AnalyticalFramework}

In this section, we present the axioms and definitions that form our analytic framework. We adhere to the thermodynamic principle of \cite{Anfinsen}, and further assume that aminoacid interactions in the universal energy function are totally defined by pairwise interactions. We try to abstract concepts of purely biological nature as much as possible. Again, the purpose is to bring impossibility results from voting theory to our 2-step preference model and suggest that they might be relevant to other problems related to protein folding.

\subsection{Thermodynamic Hypothesis}
\label{Sec-ThermodynamicHypothesis}

There are two principal questions in the Protein Folding research \cite{ProtFolding}. The first one concerns the reasons why proteins settle in a specific tertiary structure. The second asks how to correctly predict the tertiary structure given the protein's aminoacid configuration.

The thermodynamic hypothesis \cite{Anfinsen} says that the tertiary structure results solely from \emph{aminoacid interactions}, maximizing a function of utility character and physiochemical nature. Also, the final structure is (1) \emph{unique}, as other possible states have strictly lower energy; (2) \emph{stable}, as small external changes do not disrupt the structure; and (3) \emph{viable}, smoothly accessible by physiochemical changes.

\begin{axiom}
The thermodynamic hypothesis of \cite{Anfinsen} is valid, hence the protein's tertiary structure is unique, stable, viable, and depends solely on aminoacid interactions.
\end{axiom}

\subsection{Pairwise Interactions}
\label{Sec-PairwiseInteractions}

Previous literature suggests energy functions depending on number, distance, and additional physiochemical properties (hydrophobia/hydrophilia, among others) between \emph{pairs} of aminoacids in the three-dimensional protein structure \cite{energy1, energy2, estimation1}. We will follow such approach, hence \emph{pairwise interactions} fully characterize our aminoacid interactions in this work.

Identifying pairwise interactions instead of trying to obtain all collective, grouped interactions is much more viable computationally. Besides, it is conceivable that more intricate interaction patterns be extrapolated from pairwise interactions. In this work, we consider pairwise energy functions, i.e., those totally defined by energy potentials on pairwise interactions among aminoacids.

\begin{axiom}[Pairwise Hypothesis]
\emph{All} existent information on aminoacid interactions \emph{can} be obtained by considering \emph{only} pairwise interactions.
\end{axiom}

\subsubsection{Interaction Classes and Instances}
\label{Sec-InteractionClassesInstances}

An \emph{interaction class} is a pair $(X_1, X_2)$ where $X_1$ and $X_2$ are \emph{names} of aminoacids. For instance, $(A,V)$ is the interaction of the aminoacids alanine and valine, and $(G,L)$ is the interaction of aminoacids glycine and leucine. We consider $(X_1, X_2)$ equal to $(X_2, X_1)$ for any aminoacid names $X_1,X_2$. Considering the standard 20 aminoacids that characterize proteins, we have 380 possible interactions classes.

Within a protein's stable fold, an \emph{interaction instance} of $(X_1,X_2)$ \emph{occurs} if the distance between two aminoacids of types $X_1$ and $X_2$, is within an arbitrary threshold $\tau$. In practice, the distance could be taken over heavy atoms, centroids in aminoacids, or $c_\alpha$ locations in aminoacids (the location of the central carbon atom). Our results are independent of the choice above. We represent interaction classes with uppercase letters (e.g. $I$), and interaction instances with lowercase letters (e.g. $i$). In a protein's stable fold $P$, the interaction instances of the interaction class $I$ defines the set $\inst{P}{I}$.

Establishing a distance threshold might seem arbitrary and even unrealistic, as overlooking interaction instances that barely fail such distance may discard useful information. Also, whatever account attributed to interaction instances must also consider certain physiochemical properties, for instance, hydrophobia or hydrophilia\footnote{Respectively, the tendency or aversion to bind to water.}, usually compiled into tables such as the Miyazawa-Jernigan~\cite{estimation1}. One could argue that the right choice of distance threshold and tables describing physiochemical interactivity could be enough to totally define the UEF. Let us fix (yet not specify) $\tau$ big enough as to accommodate all possible \emph{actual} aminoacid interactions, and \emph{assume} that contacts are scored accordingly.

Therefore, to each interaction instance $i$, we associate an \emph{energetic score} $e(i) \in \R$, which abstracts the following parameters: (1) distance threshold $\tau$; (2) contact point; (3) the physiochemical properties of the aminoacids involved. As far as identifying intrinsic preferences over interaction classes in the UEF, our results are independent of the above choices -- in other words, independent of the energetic score function. The key to such generality of results lies in our 2-step preference model, in our axiomatic setting, and impossibility results from voting theory.

\section{Internal Aggregation}
\label{Sec-InternalAggregation}

The \emph{internal aggregation} is the procedure where the energetic scores of the interaction instances are taken into account and define a protein's individual preference. For any protein $P$:
\begin{itemize}
	\item Each interaction instance $i$ has an energetic score $e(i) \in \R$ modeling distance threshold, contact point, and physiochemical properties of $i$. The function $e()$ is fixed yet arbitrary. The value of $e(i)$ can be positive or negative, modeling physiochemical properties that might manifest with opposing effect (for instance, hydrophobia vs. hydrophilia).
	\item Each interaction class $I$ has a set of interaction instances $\inst{P}{I}$. The individual preference of $P$ regarding $I_1$ versus $I_2$ depends \emph{only} on $\inst{P}{I_1}$ and $\inst{P}{I_2}$, and the energetic scores respectively associated with each interaction instance. We consider two scenarios:
	\begin{enumerate}
		\item For each interaction instance $I$, we can infer an utility value $u_P(I) \in \R$, completely defined in terms of $\inst{P}{I}$. Considering the same protein/function, the utilities are comparable: if $u_P(I_1) > u_P(I_2)$, then $P$ manifests preference (i.e., bias) for $I_1$ instead of $I_2$; if $u_P(I_1) = u_P(I_2)$, then $P$ does not manifest any preference (i.e., bias) for $I_1$ instead of $I_2$.
	
		
Note that we do \emph{not} assume comparability between utility levels across two different proteins. This is what characterizes our 2-step preference model: utilities are meaningful to its corresponding protein only. In Sec.~\ref{Sec-UtilityPreferenceModel} we in fact vary the comparability between utility levels across proteins, and explore the consequences of each mode.

		\item We can only infer a total order $\preceq_P$ among interaction classes, completely defined in terms of $\inst{P}{I_1}$ and $\inst{P}{I_2}$. If $I_1 \succ_P I_2$, then $P$ prefers $I_1$ to $I_2$; if $I_1 \prec_P I_2$, then $P$ prefers $I_2$ to $I_1$; if $I_1 =_P I_2$, then $P$ is indifferent regarding $I_1$ versus $I_2$.
	\end{enumerate}
\end{itemize}


We call the first scenario the \emph{utility} preference model, and the second scenario is an \emph{ordinal} preference model (our terminology). We examine the two scenarios separately, in each case we apply the appropriate concepts and impossibility results from social choice theory. Recapitulating, our goal is to evaluate the viability of the following strategy:
\begin{description}
	\item[Internal Aggregation:] Examine interaction instances of each protein, deriving the manifested \emph{individual preferences}, either a utility function $u_P$ or a total order $\preceq_P$, for each protein $P$.
	\item[External Aggregation:] Examine individual preferences and derive the \emph{global preference}, which is similarly a utility function or a total order.
\end{description}

We assume that the internal aggregation might be subject to errors, but not in presence or scope such that they might deem any individual preference ``more representative'' than another. Errors are small and covert. This idea is formalized by the anonymity requirement (Axiom~\ref{axiom-anonimity}) for the external aggregation.

\section{External Aggregation}
\label{Sec-ExternalAggregation}

In this section we will axiomatize the requirements of the external aggregation procedure. We discuss the general problem structure on Sec.~\ref{Sec-BasicNotation}, followed by an axiomatic formulation of the basic requirements on Sec.~\ref{Sec-GeneralAxioms}. We finally show that the external aggregation (axiomatized as below) is subject to fundamental impossibility results studied in the context of voting theory regardless of: (1) the function $e$; and (2) the utility function $u_P$ or the total order $\preceq_P$ for any protein $P$.

In other words, the strategy outlined above, in the end of Sec.~\ref{Sec-InternalAggregation}, will pose some fundamental methodological limitations \emph{regardless} on how the internal aggregation is defined -- how distances, contact, physiochemical potentials, as well how preferences are extracted. Specific axioms and impossibility theorems for the case with ordinal preference is presented on Sec.~\ref{Sec-OrdinalPreferenceModel}, and for the case with utility preference on Sec.~\ref{Sec-UtilityPreferenceModel}.

\subsection{Basic Notation}
\label{Sec-BasicNotation}

The external aggregation is defined as follows. Consider all interaction classes $\cI = \set{I_1, \ldots, I_{\lastitem}}$, and arbitrary $n$ proteins $\cP = \set{P_1, \ldots, P_n}$, with some finite $n \ge 2$. Considering our two preference models:
\begin{itemize}
	\item In the ordinal preference setting, the individual preference of $P_i$ is a \emph{total order} $\preceq_i$ over $\cI$. By definition, these relations are total ($I_a \preceq_i I_b$ or $I_b \preceq_i I_a$ for any $I_a,I_b$), transitive (if $I_a \preceq_i I_b$ and $I_b \preceq_i I_c$ then $I_a \preceq_i I_c$), and antisymmetric ($I_a \preceq_i I_b$ and $I_b \preceq_i I_a$ implies $I_a =_i I_b$).

	\item In the utility preference setting, the individual preference of $P_i$ is a \emph{utility function} $u_i: \cI \to \R$. The utilities imply a total order (if $u_i(I_a) \le u_i(I_b)$ then $I_a \preceq_i I_b$).
\end{itemize}

The \emph{individual preference} manifested by protein $P_i$ is denoted as $\procinput{i}$. The \emph{preference profile} is composed of all individual preferences: $\algorithminput = (\procinput{0}, \ldots, \procinput{\lastproc})$. Finally, the \emph{universe of preference profiles} is composed of all preference profiles: $\universeinputs = \set{\algorithminput: \text{$\algorithminput$ is a valid preference profile}}$. The external aggregation procedure returns a \emph{global preference}, either a total order $\preceq_\algorithmoutput$ or a utility function $u_\algorithmoutput$ over $\cI$. We consider the two cases separately.

In terms of notation, for any individual or the global preference $X$, if $I_a \preceq I_b$ in $X$, we write $\orderLowerEqual{X}{I_a}{I_b}$ (and similarly for other relations). So, if $I_a \preceq I_b$ in $\procinput{i}$, we write $\orderLowerEqual{\procinput{i}}{I_a}{I_b}$; besides, if $I_a \preceq I_b$ in $\algorithmoutput$, we write $\orderLowerEqual{\algorithmoutput}{I_a}{I_b}$. For any two preferences $X$ and $Y$, if $\orderHigherEqual{X}{I_a}{I_b} \Leftrightarrow \orderHigherEqual{Y}{I_a}{I_b}$, then we write $\sameorder{X}{Y}{I_a}{I_b}$.

\subsection{General Axioms}
\label{Sec-GeneralAxioms}

Our first axioms simply models the requirement that the procedure receives multiple individual preferences and produces a single global preference. Individual preferences are transitive ordering over interaction classes.
\begin{axiom}
\label{axiom-agree}
\textbf{Agreement} After the external aggregation, we obtain a single global preference among interaction classes.
\end{axiom}

\begin{axiom}
\label{axiom-transitivity}
\textbf{Transitivity} The individual preferences as well as the global preference are total transitive orderings over interaction classes.
\end{axiom}

The unrestricted domain requirement, stated below, is an imposition on the external aggregation, not on the actual preference profile. Intuitively speaking, the external aggregation must be able to handle arbitrary transitive individual preferences.
\begin{axiom}
\label{axiom-unrestdomain}
\textbf{Unrestricted Domain} Any total order among interaction classes is permitted in a preference profile.
\end{axiom}

If any set of at least three interaction classes $S = \set{I_1,I_2,I_3}$ is such that both (1) $I_i \succeq I_j$ and (2) $I_j \succeq I_i$ are possible preferences in the preference profile, for any $1 \le i,j \le 3$, $S$ is called a 3-unrestricted preference set.

In a first glance, Axiom~\ref{axiom-unrestdomain} may appear overly imposing for practical purposes. After all, the preference profile expressed in the PDB, given a suitable internal aggregation procedure, can be more restrictive. However, it seems reasonable to expect a 3-unrestricted preference set in the preference profile after the internal aggregation. In that case, the Arrow's Impossibility Theorem (Theorem~\ref{theorem-arrow}) is applicable, and May's Theorem (Theorem~\ref{theorem-may}) may imply the possibility of breaking transitivity (discussed ahead). Even considering the much less restrictive condition of having only a 3-unrestricted preference set in the preference profile, the results apply. We keep the complete unrestricted domain axiom for its generality appeal.

The unanimity requirement requires that if all proteins expose a preference on a particular interaction class over another, the unanimity should be respected.
\begin{axiom}
\label{axiom-unanimity}
\textbf{Unanimity.} For any $\algorithminput \in \universeinputs$, if $\sameorder{\procinput{i}}{\procinput{j}}{I_a}{I_b}$ for all $P_i,P_j$ then $\sameorder{\procinput{i}}{\algorithmoutput}{I_a}{I_b}$ for all $P_i$, considering arbitrary $I_a,I_b$;
\end{axiom}

We assume that proteins manifest preferences over interaction classes in an indistinguishable manner; i.e., we cannot infer whether protein $A$ manifests a ``better'' individual preference than protein $B$. As mentioned before, we assume that the internal aggregation might be subject to errors, but not in presence or scope such that they might deem any individual preference ``more representative'' than another. Errors are small and covert. In specific terms, the preferences manifested by protein $A$ and those by protein $B$ are considered without regard to their identity.
\begin{axiom}
\label{axiom-anonimity}
\textbf{Anonymity.} For any $\algorithminput',\algorithminput'' \in \universeinputs$, whenever $\algorithminput' = (\procinput{0}', \ldots, \procinput{\lastproc}')$ is a permutation of $\algorithminput'' = (\procinput{0}'', \ldots, \procinput{\lastproc}'')$, we have that $\algorithmoutput' = \algorithmoutput''$.
\end{axiom}

We note that anonymity implicates in non-dictatorship, i.e., the fact that no individual preference completely and exclusively defines the global preference.
\begin{axiom}
\label{axiom-ndict}
\textbf{Non-Dictatorship.} For all $P_i \in \cP$ there exists at least one $\algorithminput \in \universeinputs$ and elements $I_a, I_b$ such that $\orderHigher{\procinput{i}}{I_a}{I_b}$ but $\orderLowerEqual{\algorithmoutput}{I_a}{I_b}$.
\end{axiom}

\subsection{Ordinal Preference Model}
\label{Sec-OrdinalPreferenceModel}


Our external aggregation procedure is modeled after \emph{social welfare functions}, a basic concept in social choice theory, simply representing a procedure that receives total orders on elements of $\cI$ and produces a total order over elements of $\cI$. We start with basic axioms meaningful for the ordinal preference model, followed by impossibility results.

\subsubsection{Axioms}

The neutrality requirement, formalized below, basically requires that all pairwise preferences over interaction classes are treated equally: the procedure computing the outcome of all pairwise preferences is oblivious to the particular names of interaction classes. We say that two preference profiles are \emph{compatible} whenever they have the same number of individual preferences.

\begin{axiom}
\label{axiom-neutrality}
\textbf{Neutrality.} For any compatible $(\algorithminput',\algorithminput'')$ and corresponding outputs $(\algorithmoutput',\algorithmoutput'')$, if we have $[ \forall P_i \, e_a \succeq_{\procinput{i}'} e_b \Leftrightarrow e_c \succeq_{\procinput{i}''} e_d ]$ then $e_a \succeq_{O'} e_b \Leftrightarrow e_c \succeq_{O''} e_d$, considering arbitrary $e_a,e_b,e_c,e_d$;
\end{axiom}

If we make $e_c = e_a$ and $e_d = e_b$, neutrality implies the famous and controversial Independence of Irrelevant Alternatives (IIA). This requirement basically says that for any two aggregation scenarios where individual preferences over arbitrary elements $e_a$ and $e_b$ remain identical, the algorithm output similarly remains identical over those elements.

\begin{axiom}
\label{axiom-iia}
\textbf{Independence of Irrelevant Alternatives (IIA).} For any compatible $\algorithminput',\algorithminput'' \in \universeinputs$ and corresponding outputs $\algorithmoutput',\algorithmoutput''$, if we have $[ \forall P_i \, \sameorder{\procinput{i}'}{\procinput{i}''}{e_a}{e_b} ]$ then we have $\sameorder{\algorithmoutput'}{\algorithmoutput''}{e_a}{e_b}$, considering arbitrary $e_a,e_b$;
\end{axiom}

Intuitively speaking, we want an aggregation procedure that behaves similarly to continuous functions in the following sense: small changes in the preference profile will not induce substantial changes in the global preference. We formalize this requirement using distance metrics. We first define a distance metric $d: \universeinputs \times \universeinputs \rightarrow \R$, which naturally satisfies the following:
\begin{enumerate}
	\item $d(X,X) = 0$
	\item $d(X,X') = d(X',X)$
	\item $d(X,X'') \le d(X,X') + d(X',X'')$
\end{enumerate}
for any $X,X' \in \universeinputs$. Then, between two compatible preference profiles $\algorithminput', \algorithminput''$, we define $$ D = \sum_{i \in \setprocs} d(\procinput{i}, \procinput{i}') \mathrm{.} $$
Technically, we want smaller changes in the preference profile to produce smaller changes in the global preference, where changes are measured according to the distance function. The motivation is having an external aggregation that is \emph{robust} to small imprecision in the internal aggregation. Formally:
\begin{axiom}
\label{axiom-proximity}
\textbf{Proximity Preservation.} There exists a distance function $d$ over $\universeinputs$ such that $$ D(\algorithminput, \algorithminput') \le D(\algorithminput, \algorithminput'') \Rightarrow d(O, O') \le d(O, O'') \mathrm{.} $$
\end{axiom}

We want an external aggregation that responds positively (or at least monotonically) to the preferences of the proteins. We want that positive changes on the preference of $I_1$ over $I_2$, say, due to improving our internal aggregation, induce positive (or at least monotonic) changes in the global order. The two axioms are presented below.

\begin{axiom}
\label{axiom-posrep}
\textbf{Positive Responsiveness.} For any $\algorithminput',\algorithminput'' \in \universeinputs$ and corresponding outputs $\algorithmoutput',\algorithmoutput''$, if $\sameorder{\procinput{i}'}{\procinput{i}''}{e_a}{e_b}$ for all $i \ne j$, and, for some $j$
\begin{enumerate}
	\item $\orderLower{\procinput{j}'}{e_a}{e_b} \Rightarrow \orderHigherEqual{\procinput{j}''}{e_a}{e_b}$;
	\item $\orderEqual{\procinput{j}'}{e_a}{e_b} \Rightarrow \orderHigher{\procinput{j}''}{e_a}{e_b}$,
\end{enumerate}
then $\orderHigherEqual{\algorithmoutput'}{e_a}{e_b} \Rightarrow \orderHigher{\algorithmoutput''}{e_a}{e_b}$, considering arbitrary $e_a,e_b$.
\end{axiom}

\begin{axiom}
\label{axiom-monrep}
\textbf{Monotonic Responsiveness.} For any $\algorithminput',\algorithminput'' \in \universeinputs$ and corresponding outputs $\algorithmoutput',\algorithmoutput''$, if $\sameorder{\procinput{i}'}{\procinput{i}''}{e_a}{e_b}$ for all $i \ne j$, and, for some $j$
\begin{enumerate}
	\item $\orderLower{\procinput{j}'}{e_a}{e_b} \Rightarrow \orderHigherEqual{\procinput{j}''}{e_a}{e_b}$;
	\item $\orderEqual{\procinput{j}'}{e_a}{e_b} \Rightarrow \orderHigher{\procinput{j}''}{e_a}{e_b}$,
\end{enumerate}
then $\orderHigherEqual{\algorithmoutput'}{e_a}{e_b} \Rightarrow \orderHigherEqual{\algorithmoutput''}{e_a}{e_b}$, considering arbitrary $e_a,e_b$.
\end{axiom}

\subsubsection{Impossibility Results}

In this section, we romp through impossibility results from social choice theory and expose contradictions hidden in our previous axioms. First, considering the results from \cite{ProximityPreservation}, we conclude that any external aggregation procedure that respects unanimity and satisfies anonymity does not preserve proximity.
\begin{theorem}
\label{theorem-proximity}
No external aggregation can simultaneously respect Axioms~\ref{axiom-unanimity}, \ref{axiom-anonimity}, \ref{axiom-proximity}.
\end{theorem}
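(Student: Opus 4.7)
The plan is to derive a contradiction by showing that anonymity forces the output to be rigid under permutations of proteins, while proximity preservation forces the output to respond continuously to profile changes, and unanimity rules out the trivially constant aggregation that would otherwise reconcile the two. The argument is a direct reduction to the impossibility result in~\cite{ProximityPreservation}: the framework of Sec.~\ref{Sec-BasicNotation} casts our problem as a social welfare function on a finite alternative set $\cI$ of size $\lastitem$ with $n\geq 2$ voters, and Axioms~\ref{axiom-unanimity}, \ref{axiom-anonimity}, \ref{axiom-proximity} correspond verbatim to the Pareto, anonymity, and proximity axioms in that paper. Assume for contradiction that an external aggregation satisfies all three, with associated distance $d$.

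First, I would fix two classes $I_a, I_b \in \cI$ and take $\algorithminput^+$ to be any profile in which every $\procinput{i}$ satisfies $\orderHigher{\procinput{i}}{I_a}{I_b}$, and $\algorithminput^-$ to be the profile obtained by reversing only this pair in each protein's order. By Axiom~\ref{axiom-unanimity}, the corresponding outputs $\algorithmoutput^+, \algorithmoutput^-$ disagree on $(I_a,I_b)$, so $d(\algorithmoutput^+, \algorithmoutput^-) > 0$. This pins down two ``poles'' that the aggregation must resolve differently, which will be essential for producing a strict inequality downstream.

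Next, I would construct two intermediate profiles $\algorithminput', \algorithminput''$ that (i) are permutations of one another, so by Axiom~\ref{axiom-anonimity} yield equal outputs $\algorithmoutput' = \algorithmoutput''$, yet (ii) are asymmetrically positioned relative to the poles, in the sense that $D(\algorithminput^+, \algorithminput') < D(\algorithminput^+, \algorithminput'')$ while $D(\algorithminput^-, \algorithminput') > D(\algorithminput^-, \algorithminput'')$. Concretely, this requires flipping the $(I_a,I_b)$-preference of different subsets of proteins in $\algorithminput'$ and $\algorithminput''$ such that the \emph{multisets} of individual preferences are the same (anonymity) but the per-coordinate distances to the two unanimous poles decompose asymmetrically. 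Applying Axiom~\ref{axiom-proximity} with base $\algorithminput^+$ and with base $\algorithminput^-$ then forces $d(\algorithmoutput^+, \algorithmoutput') \leq d(\algorithmoutput^+, \algorithmoutput'')$ and $d(\algorithmoutput^-, \algorithmoutput') \geq d(\algorithmoutput^-, \algorithmoutput'')$. Since $\algorithmoutput' = \algorithmoutput''$, both inequalities must hold as equalities. The contradiction comes from sharpening the construction so that at least one of the two inequalities must be strict --- which follows by iterating the argument along a chain of single-protein flips, because at some step the profile distance changes strictly while anonymity still forces the output to be the same.

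The main obstacle is that Axiom~\ref{axiom-proximity} only asserts existence of \emph{some} metric $d$, so the construction must be robust against adversarial choice of $d$. This is handled by observing that any admissible $d$ must respect coordinate-wise symmetry (otherwise $D$ itself, a sum over proteins, would not be permutation-invariant in a way consistent with the non-strict inequality of proximity preservation on permuted inputs). Once this symmetry is extracted, the strict/equal clash above goes through for every such $d$, which is precisely the content of Baigent's theorem cited as~\cite{ProximityPreservation}. The remainder of the proof is bookkeeping to verify that the translation of terminology (proteins $\leftrightarrow$ voters, interaction classes $\leftrightarrow$ alternatives) preserves the hypotheses $|\cI| \geq 2$ and $n \geq 2$ required by that result, both of which hold by assumption.
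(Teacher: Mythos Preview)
Your proposal is correct and takes essentially the same approach as the paper: the paper's entire proof is the one-line citation ``Proof in \cite{ProximityPreservation}'', and your argument is explicitly framed as a direct reduction to that same result, with the bookkeeping that proteins correspond to voters and interaction classes to alternatives. The intermediate sketch you offer of Baigent's construction goes beyond anything the paper supplies (and is a bit loose in places, e.g.\ the assertion that any admissible $d$ must be coordinate-symmetric is not quite how the argument runs), but the core reduction is identical.
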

\begin{proof}
Proof in \cite{ProximityPreservation}.
\end{proof}

Theorem~\ref{theorem-proximity} exposes a methodological difficulty: an external aggregation procedure that respects anonymity and unanimity does not preserve proximity, which goes against the intuition that smaller changes in the preference profile should imply in smaller changes in the global preference. Note that the theorem in~\cite{ProximityPreservation} is valid even when the unrestricted domain axiom is not considered.
However, if we waive proximity preservation (hence keeping unanimity and anonymity), having an unrestricted domain still frustrates our requirements, as we conflict with neutrality:
\begin{corollary}
\label{corollary-arrow}
No external aggregation can simultaneously respect Axioms~\ref{axiom-transitivity}, \ref{axiom-unrestdomain}, \ref{axiom-unanimity}, \ref{axiom-anonimity}, \ref{axiom-neutrality}.
\end{corollary}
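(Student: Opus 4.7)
The plan is to reduce directly to Arrow's Impossibility Theorem, which the paper evidently states elsewhere as Theorem~\ref{theorem-arrow}. Arrow's theorem asserts that any social welfare function on a set of at least three alternatives that respects transitivity, unrestricted domain, unanimity, and the Independence of Irrelevant Alternatives (IIA) must be dictatorial. Our target is to show that the five axioms in the corollary together put us in the Arrow setting and, at the same time, preclude a dictator.

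First I would observe that neutrality implies IIA, a reduction the paper already flags immediately after Axiom~\ref{axiom-neutrality}: specializing $e_c = e_a$ and $e_d = e_b$ in Axiom~\ref{axiom-neutrality} yields exactly Axiom~\ref{axiom-iia}. Hence any external aggregation satisfying Axioms~\ref{axiom-transitivity}, \ref{axiom-unrestdomain}, \ref{axiom-unanimity}, \ref{axiom-neutrality} in particular satisfies transitivity, unrestricted domain, unanimity, and IIA. Since $|\cI| = 380 \ge 3$, Arrow's theorem applies and forces the aggregation to be dictatorial: there exists some $P_i \in \cP$ such that for every $\algorithminput \in \universeinputs$ and every pair $I_a,I_b \in \cI$, $\orderHigher{\procinput{i}}{I_a}{I_b}$ implies $\orderHigher{\algorithmoutput}{I_a}{I_b}$.

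Next I would derive the contradiction from anonymity. The paper already records (in the remark following Axiom~\ref{axiom-anonimity}) that anonymity implies non-dictatorship in the sense of Axiom~\ref{axiom-ndict}: given any candidate dictator $P_i$, swap the individual preferences of $P_i$ with those of some $P_j \ne j$ to obtain a permuted profile whose aggregate, by Axiom~\ref{axiom-anonimity}, is unchanged, even though $P_i$'s preference has changed. Exhibiting a pair $I_a,I_b$ on which $P_i$ and $P_j$ disagree (possible by Axiom~\ref{axiom-unrestdomain} with $n \ge 2$) contradicts the dictatorship conclusion.

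The main obstacle, and really the only nontrivial one, is the invocation of Arrow's theorem; everything else is bookkeeping: verifying that neutrality is strictly stronger than IIA, that $|\cI| \ge 3$, and that anonymity combined with unrestricted domain rules out the dictator that Arrow produces. No new impossibility argument is needed; the proof is an application of Theorem~\ref{theorem-arrow} together with the anonymity-implies-non-dictatorship remark.
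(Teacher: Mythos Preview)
Your proposal is correct and follows exactly the paper's own argument: the paper's proof is the one-line observation that anonymity implies non-dictatorship and neutrality implies IIA, reducing the corollary to Arrow's Impossibility Theorem (Theorem~\ref{theorem-arrow}). Your write-up simply unpacks these two implications in more detail; the only blemish is the typo ``$P_j \ne j$'' where you mean $j \ne i$.
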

\begin{proof}
Since anonymity implicates in non-dictatorship and neutrality implicates in the IIA, this is a corollary of the Arrow's Impossibility Theorem~\cite{arrow}.
\end{proof}

The famous Arrow's Impossibility Theorem \cite{arrow} is widely considered as a landmark of modern social choice theory. There exist many proofs for such result (including \cite{arrow,arrovian-imposresults,arrow-three-proofs}). The original theorem is presented below. Note that even relaxing anonymity to non-dictatorship, and neutrality to the IIA, the original theorem applies:

\begin{theorem}[Arrow's Impossibility Theorem]
\label{theorem-arrow}
No external aggregation (modeled after social welfare functions) can simultaneously respect Axioms~\ref{axiom-transitivity}, \ref{axiom-unrestdomain}, \ref{axiom-unanimity}, \ref{axiom-ndict}, \ref{axiom-iia}.
\end{theorem}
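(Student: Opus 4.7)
The plan is to use the ``pivotal voter'' argument in the style of Geanakoplos's first proof in \cite{arrow-three-proofs}, which is the cleanest route given our axiomatization. The goal is to assume Axioms~\ref{axiom-transitivity}, \ref{axiom-unrestdomain}, \ref{axiom-unanimity}, and \ref{axiom-iia}, then exhibit a protein $P_{j^*}$ that single-handedly determines $\algorithmoutput$ on every pair of interaction classes, contradicting Axiom~\ref{axiom-ndict}. The hypothesis that $|\cI| = 380 \ge 3$ is crucial; without at least three interaction classes, the argument collapses.

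First I would establish an \emph{extremal lemma}: fix any $I_b \in \cI$; for any profile in which each protein ranks $I_b$ either strictly first or strictly last, $\algorithmoutput$ must also rank $I_b$ strictly first or strictly last. The proof is by contradiction. If $\orderHigher{\algorithmoutput}{I_a}{I_b}$ and $\orderHigher{\algorithmoutput}{I_b}{I_c}$ for some $I_a, I_c$ distinct from $I_b$, use Axiom~\ref{axiom-unrestdomain} to move $I_c$ directly above $I_a$ in every individual preference while preserving the top-or-bottom position of $I_b$ per voter. Axiom~\ref{axiom-unanimity} forces $\orderHigher{\algorithmoutput}{I_c}{I_a}$ in the new profile, while Axiom~\ref{axiom-iia} on the pairs $\{I_a,I_b\}$ and $\{I_b,I_c\}$ preserves the old social verdicts, violating Axiom~\ref{axiom-transitivity}.

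Next I would locate the pivotal protein. Start from a profile where every protein ranks $I_b$ last, so by unanimity $I_b$ is socially last. Sweep through $P_1,\ldots,P_n$ in order, promoting $I_b$ from last to first inside each individual preference. By unanimity again, after the last protein switches, $I_b$ is socially first. Hence there is a first index $j^*$ at which the social position of $I_b$ changes; the extremal lemma guarantees that the change is precisely from ``strictly last'' to ``strictly first''. Call the two profiles at this transition $\algorithminput^-$ and $\algorithminput^+$.

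Finally I would promote $P_{j^*}$ to a dictator. For any pair $\{I_a, I_c\}$ disjoint from $\{I_b\}$, consider a profile where $P_{j^*}$ ranks $I_a \succ I_b \succ I_c$, proteins with $i < j^*$ rank $I_b$ at the top, and proteins with $i > j^*$ rank $I_b$ at the bottom (other positions arbitrary). Comparing with $\algorithminput^-$ on the pair $\{I_b, I_c\}$ and with $\algorithminput^+$ on the pair $\{I_a, I_b\}$, Axiom~\ref{axiom-iia} forces $\orderHigher{\algorithmoutput}{I_a}{I_b}$ and $\orderHigher{\algorithmoutput}{I_b}{I_c}$, hence $\orderHigher{\algorithmoutput}{I_a}{I_c}$ by Axiom~\ref{axiom-transitivity}. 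A second application of Axiom~\ref{axiom-iia} transfers this strict preference to every profile in which $P_{j^*}$ prefers $I_a$ to $I_c$, irrespective of the other proteins. A symmetric argument, choosing a third class to play the role of $I_b$, covers pairs that include $I_b$, so $P_{j^*}$ dictates $\algorithmoutput$ on all of $\cI$, contradicting Axiom~\ref{axiom-ndict}.

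The main obstacle is the extremal lemma together with the careful IIA bookkeeping in the final step: one must construct auxiliary profiles that agree pairwise with $\algorithminput^-$ and $\algorithminput^+$ simultaneously on the ``right'' pair, which only works because the extremal lemma rules out any intermediate social position of $I_b$. Once that pivot is pinned down, the remainder is essentially mechanical application of Axioms~\ref{axiom-iia} and \ref{axiom-transitivity}.
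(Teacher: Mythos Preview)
Your proposal is substantively correct and follows the standard Geanakoplos pivotal-voter argument from \cite{arrow-three-proofs}. The paper itself does not give a proof at all: it simply writes ``Proof in \cite{arrow}'' and defers entirely to the original 1963 reference. So your route is not merely different but strictly more informative than what the paper provides; you supply a self-contained argument where the paper supplies only a citation.

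One small bookkeeping slip in your final step: the two IIA comparisons are swapped. In the auxiliary profile where $P_{j^*}$ ranks $I_a \succ I_b \succ I_c$, voters $i<j^*$ put $I_b$ on top, and voters $i>j^*$ put $I_b$ at the bottom, the pair $\{I_a,I_b\}$ agrees voter-by-voter with $\algorithminput^-$ (not $\algorithminput^+$), because in $\algorithminput^-$ voter $j^*$ still has $I_b$ at the bottom and hence $I_a \succ I_b$; this is what yields $\orderHigher{\algorithmoutput}{I_a}{I_b}$. Symmetrically, the pair $\{I_b,I_c\}$ agrees with $\algorithminput^+$ (not $\algorithminput^-$), since in $\algorithminput^+$ voter $j^*$ has $I_b$ on top and hence $I_b \succ I_c$; this yields $\orderHigher{\algorithmoutput}{I_b}{I_c}$. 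With that correction, transitivity gives $\orderHigher{\algorithmoutput}{I_a}{I_c}$ as you claim, and the remainder of the argument goes through unchanged.
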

\begin{proof}
Proof in \cite{arrow}.
\end{proof}



If we want a positively responsive aggregation procedure (i.e., respecting axiom~\ref{axiom-posrep}), May's Theorem \cite{MayTheorem} says that the preference regarding \emph{each} pairwise comparison among interaction classes is necessarily decided by simple majority. For any interaction classes $I_a,I_b \in \cI$, define
$$ N(I_a > I_b) = |\set{P_i: I_a \succ_{\protinput{i}} I_b}| \mathrm{.} $$

\begin{theorem}[May's Theorem]
\label{theorem-may}
For each interaction class $(I_a,I_b)$, any decision function\footnote{For us, external aggregation procedure.} respecting Axioms~\ref{axiom-unanimity}, \ref{axiom-neutrality}, and~\ref{axiom-posrep} is such that $I_a \succeq_{\algorithmoutput} I_b \Leftrightarrow [N(I_a > I_b) \ge N(I_b > I_a)]$.
\end{theorem}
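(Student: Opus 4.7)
The plan is to analyze the restriction of the external aggregation to the single pair $(I_a, I_b)$. Each voter's preference on this pair takes one of three values, which I encode as $v_i \in \{+1, 0, -1\}$ depending on whether $I_a \succ_i I_b$, $I_a =_i I_b$, or $I_a \prec_i I_b$; the output is similarly encoded as $f(v_1, \ldots, v_n) \in \{+1, 0, -1\}$. The goal is to show $f(v) = \mathrm{sign}\bigl(\sum_i v_i\bigr)$, which is equivalent to the stated biconditional since $\sum_i v_i = N(I_a > I_b) - N(I_b > I_a)$.

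The first step uses Neutrality to establish antisymmetry. The permutation of $\cI$ swapping $I_a$ with $I_b$ (and fixing every other class) converts each voter's vote from $v_i$ to $-v_i$, and by Axiom~\ref{axiom-neutrality} the output must flip sign: $f(-v) = -f(v)$. For any balanced profile, i.e.\ one with $N(I_a > I_b) = N(I_b > I_a)$, I would pair each voter preferring $I_a$ with one preferring $I_b$, and invoke the framework's interchangeability of voters (Anonymity, Axiom~\ref{axiom-anonimity}) to argue that the sign-flipped profile is a permutation of the original; combined with antisymmetry this forces $f = 0$, i.e.\ $I_a =_{\algorithmoutput} I_b$.

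The second step applies Positive Responsiveness (Axiom~\ref{axiom-posrep}) inductively. Starting from a balanced profile where $f = 0$, a single voter's upward change (from $-1$ to $0$, from $0$ to $+1$, or directly from $-1$ to $+1$) forces the output to become strictly $+1$. Any profile with $N(I_a > I_b) > N(I_b > I_a)$ can be reached from some balanced profile via a chain of such single-voter upticks, and iterating positive responsiveness preserves $f = +1$ throughout. The symmetric case with $N(I_a > I_b) < N(I_b > I_a)$ is handled identically, while Unanimity (Axiom~\ref{axiom-unanimity}) serves as a consistent boundary check at the extremes.

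The main obstacle is the balanced-profile step: without anonymity-style symmetrization, weighted-majority or dictator-like rules would also satisfy Neutrality and Positive Responsiveness while violating the conclusion, so the argument hinges on treating voters interchangeably -- a structural assumption of the framework, reinforced by Axiom~\ref{axiom-anonimity}, even though it is not explicitly listed among the theorem's hypotheses. Once the balanced case is pinned down, the inductive extension via positive responsiveness is routine.
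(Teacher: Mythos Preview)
Your sketch is the classical argument for May's Theorem: encode the pairwise decision as a $\{+1,0,-1\}$-valued function, use Neutrality to obtain the antisymmetry $f(-v)=-f(v)$, combine this with voter-interchangeability to force $f=0$ on balanced profiles, and then push outward inductively via Positive Responsiveness. The paper itself offers no proof of this theorem; it simply states the result and attributes it to May, so there is nothing substantive in the paper to compare against beyond the citation.

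Your diagnosis of the hypothesis list is the real content here, and it is correct. The paper's statement invokes only Unanimity, Neutrality, and Positive Responsiveness, omitting Anonymity (Axiom~\ref{axiom-anonimity}). Without Anonymity the conclusion fails: any weighted-majority rule with distinct positive weights satisfies Unanimity, Neutrality, and Positive Responsiveness yet disagrees with simple majority on suitable profiles (e.g.\ three voters with weights $(3,1,1)$ and votes $(+1,-1,-1)$). May's original characterization uses Anonymity, Neutrality, and Positive Responsiveness---Unanimity is then a consequence rather than an assumption---so the paper appears to have swapped Anonymity for Unanimity inadvertently. Your proof is sound for the intended statement, and your remark that the balanced-profile step is precisely where Anonymity becomes indispensable is exactly right.
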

The problem implied by May's Theorem is that the global preference might be \emph{intransitive} \cite{arrovian-imposresults,social-choice-survey} if a 3-unrestricted preference set exists in the preference profile (which seems reasonable after the internal aggregation). In fact, even waiving neutrality (preserving just the IIA) conflicts with positive responsiveness (\cite{social-choice-survey}, chap. 22), so perhaps the problem is the positive responsiveness itself. Settling for monotonicity (Axiom~\ref{axiom-monrep}) works around May's theorem, but then we could argue whether waiving neutrality (preserving just the IIA) is reasonable or not.


At first, it sounds reasonable to treat preferences consistently across different interaction classes (as suggested by neutrality and by the IIA). As implied above, restrictions like neutrality (and its weaker restriction, the IIA), even with seemingly natural meaning, have strong axiomatic implications. Nevertheless, dropping neutrality poses serious methodological difficulties for our strategy: in that case, \emph{names} of interaction classes matter for our external aggregation. Their relationship and place, however, are precisely our unknowns.

Many impossibility theorems in the voting theory literature presume unrestricted domains (or simply a 3-unrestricted preference set). For the Arrow's Theorem as well as the May's Theorem, performing suitable \emph{domain restrictions}, where we limit the preference profile, might be enough to overcome the theoretical impossibilities. If preferences are \emph{single-peaked} or \emph{quasi-transitive} (please refer to~\cite{Sen66,Sen69}), we also dodge from the impossibility results above. We invite the reader to appreciate these previous results and understand the relevance of these theorems for problems related to protein folding, and perhaps in other similar contexts where a 2-step preference model is similarly suitable.

\subsection{Utility Preference Model}
\label{Sec-UtilityPreferenceModel}

In this section, we discuss the utility preference model. In this setting, it is possible to obtain utility values associated with each interaction class in each individual protein. In other words, for any protein $P_i$, there exists a utility function $u_i$ mapping interaction classes to utility values. Recall that we intend to analyze preferences manifested by individual proteins in an individual setting, and \emph{then} make an external aggregation. We presume utility functions that are not comparable across proteins. In the following subsection, we improve the comparability and explore the consequences.

For now, we technically assume that the utility function for $P$ differ in origin and scale from the utility function for $Q$, taking any proteins $P \ne Q$. Formally:
\begin{axiom}
\label{axiom-scaleorigin}
Any utility function is a strictly increasing linear transformation of another utility function: $u_P = \alpha_{pq} u_Q + \beta_{pq}$ for some $\alpha_{pq} > 0$ and any $\beta_{pq}$, considering any proteins $P$ and $Q$.
\end{axiom}
Borrowing terminology from voting theory, there is no interpersonal comparisons of utility levels or of gains and losses among different individual preferences. Note that this is fundamentally different from the approach where there is free comparison among utility levels (energy scores) measured across different proteins, as seen in~\cite{energy1,energy2,estimation1}.

With that in mind, we are still going to consider ordinal preferences, however expressed through utility functions, in a continuous manner. Consider the space $\R^{380}_{+}$ where the $i$-th coordinate of an arbitrary point $p$ is represented by $p[i]$. A point $p$ an \emph{expression descriptor} for interaction classes: $p$ is such that the bigger the $p[i]$, the bigger the associated expression for the interaction class $I_i$.

Furthermore, for sufficiently close levels of expression descriptors (i.e., points in $\R^{380}_{+}$), the utility function has sufficiently close values. Formally, we define $$ U_P(p) = \sum_{I_1 \ldots I_{\lastitem}} u_P(I_i) \cdot p[i] \mathrm{,} $$ representing the utility associated with some point $p$. Technically, $U_P$ is a utility function over the expression descriptor space. Note that $U_P$ is linear, implying that the expression of $I_1$ is worth $u_P(I_1)/u_P(I_2)$ times the expression of $I_2$ for any interaction classes $I_1$ and $I_2$ and protein $P$.

Of course, this informational framework is subject to the same impossibility results as before, but other interesting results further apply. We now briefly overview the approach of~\cite{Chichilnisky1982,Baigent2011}, describing an impossibility result that mirrors (in the continuous setting) Theorem~\ref{theorem-proximity}.

The indifference hyperplane $H_P^c$ is the hyperplane containing points $p$ where $U_P(p) = c$. Note that $H_P^c$ and $H_P^d$ are parallel for any positive constants $c$ and $d$. The unit vector perpendicular to $H_P^c$ (for an arbitrary constant $c > 0$), called $v_P$, represents the \emph{direction} where the utility \emph{increases} for $P$. The length of the vector is disconsidered precisely because we are taking a purely ordinal interpretation of the utility functions: their origin and scale are not comparable among different proteins.

The individual preference for $P$, denoted by $v_P$, is a unit vector indicating the direction that maximizes $U_P$. Note that $v_P$ denotes a unique \emph{point} in the $380$-dimensional sphere $S^{380}$. The global preference, denoted by $v_{\algorithmoutput}$, and is also a unit vector indicating the direction that maximizes the global utility. The external aggregation is such that
\begin{displaymath}
E: \underbrace{S^{380} \times \ldots \times S^{380}}_{n} \to S^{380} \mathrm{,}
\end{displaymath}
where $E$ satisfies the following:
\begin{axiom}
\label{axiom-continuity}
\textbf{Continuity.} The external aggregation $E$ is a continuous function.
\end{axiom}

The anonymity and unanimity, central requirements for our aggregate mechanism, are rephrased for our continuous setting below.
\begin{axiom}
\label{axiom-unanimity-cont}
\textbf{Continuous unanimity.} If $v_P = v$ for all proteins $P$, then the $v_\algorithmoutput = v$.
\end{axiom}

\begin{axiom}
\label{axiom-anonimity-cont}
\textbf{Continuous anonymity.} $E(v_1, \ldots, v_{\lastproc}) = E(\pi(v_1), \ldots, \pi(v_{\lastproc}))$, where $\pi$ is a permutation function over individual preferences.
\end{axiom}

\begin{theorem}
\label{theorem-continuity}
No external aggregation can simultaneously respect Axioms~\ref{axiom-unanimity-cont}, \ref{axiom-anonimity-cont}, and \ref{axiom-continuity}.
\end{theorem}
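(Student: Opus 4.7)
The plan is to invoke the classical topological impossibility argument of Chichilnisky, exploiting the non-trivial topology of the sphere $S^{380}$. The overall strategy is that any continuous aggregation $E:(S^{380})^n \to S^{380}$ induces a homomorphism on top-dimensional cohomology; continuous anonymity combined with continuous unanimity will over-determine this homomorphism so as to force an integer equation $na = 1$ with $n \geq 2$, which is impossible.

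First I would identify $H^{380}(S^{380};\Z) \cong \Z$, generated by the fundamental class $\alpha$. Using the K\"unneth formula, $H^{380}((S^{380})^n;\Z) \cong \Z^n$, with canonical basis $\set{\pi_i^*\alpha : i=1,\ldots,n}$, where $\pi_i$ denotes the $i$-th projection. Axiom~\ref{axiom-continuity} ensures that $E$ induces a well-defined homomorphism $E^*$ on cohomology, and we can write $E^*(\alpha) = \sum_{i=1}^n a_i \pi_i^*\alpha$ for some integers $a_1,\ldots,a_n$.

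Next I would appeal to continuous anonymity (Axiom~\ref{axiom-anonimity-cont}): since $E$ is invariant under permutations of its coordinates, the pullbacks must satisfy $a_1 = a_2 = \cdots = a_n$, all equal to a common integer $a$, because a permutation $\pi$ of the factors permutes the basis $\set{\pi_i^*\alpha}$ while leaving $E^*(\alpha)$ fixed. Finally, continuous unanimity (Axiom~\ref{axiom-unanimity-cont}) asserts that the composition of the diagonal $\mathrm{diag}:S^{380} \to (S^{380})^n$ with $E$ is the identity on $S^{380}$. Taking cohomology, $\alpha = (E \circ \mathrm{diag})^*(\alpha) = \sum_i a \cdot \mathrm{diag}^* \pi_i^*\alpha = na \cdot \alpha$, whence $na = 1$. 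For $n \geq 2$ and $a \in \Z$ this is impossible, delivering the contradiction.

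The main technical obstacle, in my view, is not the degree computation itself but justifying cleanly that continuous anonymity forces the integer equality $a_1 = \cdots = a_n$; one must argue carefully about how the symmetric-group action on the coordinates of $(S^{380})^n$ permutes the cohomology basis $\set{\pi_i^*\alpha}$ and combine this with the symmetry of $E$. Everything else is a homological book-keeping exercise. This is precisely the topological impossibility result established in~\cite{Chichilnisky1982, Baigent2011}, so I would conclude the proof by invoking those works for the detailed verification.
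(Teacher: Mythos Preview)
Your proposal is correct and is precisely the Chichilnisky topological argument that the paper invokes by citation. The paper's own proof consists solely of the line ``Follows from~\cite{Chichilnisky1982,Baigent2011}'', so you have in effect unpacked that reference faithfully; the cohomological degree computation on $S^{380}$ that you outline is the standard route to the result.
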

\begin{proof}
Follows from~\cite{Chichilnisky1982,Baigent2011}.
\end{proof}

Note how Theorem~\ref{theorem-continuity} mirrors Theorem~\ref{theorem-proximity} in the continuous setting. At first glance, it seems that anonymity and unanimity are inherently incompatible with a notion of continuity. In fact, the \emph{generality} imposed by seemingly innocuous axioms is also fundamental to such impossibilities. Specifically, performing \emph{domain restrictions}, that is, reducing the generality of possible individual preferences, is often sufficient to escape impossibility results (please refer to~\cite{Chichilnisky1982,Baigent2011}).

The continuity requirement models our robustness to imprecision in the internal aggregation procedure. It parallels the proximity preservation property defined for ordinal preferences. In both scenarios, in light of our general axioms, we have impossibility results that seriously hinder the methodological strategy considered in this work.

\subsubsection{Inter-Protein Comparison of Energy Levels}
\label{Sec-InterProteinComparison}

Let us now increase the comparability among utility function for different proteins. The utility function for $P$ differs in origin, but not in scale, from the utility function for $Q$, taking any proteins $P \ne Q$. Formally:
\begin{axiom}
\label{axiom-originonly}
Any utility function is a strictly increasing linear transformation of another utility function: $u_P = \alpha u_Q + \beta_{PQ}$ for some \emph{fixed and unique} $\alpha > 0$ and any $\beta_{PQ}$, considering any proteins $P$ and $Q$.
\end{axiom}
Borrowing terminology from voting theory, there is no interpersonal comparisons of utility levels, but there is comparison between gains and losses among different individual preferences, since the scale is the same. For the utility preference model, we write $\samevalue{X}{Y}{I_a}{I_b}$ to denote $u_X(I_a) = u_Y(I_b)$.
Now consider the axioms below.
\begin{description}
	\item[\textit{Utility IIA.}] For any compatible $\algorithminput',\algorithminput'' \in \universeinputs$ and $\algorithmoutput',\algorithmoutput''$, if $[ \forall P_i \, \samevalue{\procinput{i}'}{\procinput{i}''}{e_a}{e_b} ]$ then we have that $\sameorder{\algorithmoutput'}{\algorithmoutput''}{e_a}{e_b}$, considering arbitrary $e_a,e_b$;
	\item[\textit{Strict Unanimity.}] For any $\algorithminput \in \universeinputs$, if $\sameorder{\procinput{i}}{\procinput{j}}{I_a}{I_b}$ for all $P_i,P_j$ then we have $\sameorder{\procinput{i}}{\algorithmoutput}{I_a}{I_b}$ for all $P_i$, considering arbitrary $I_a,I_b$; Also, if $I_a \succ_{\protinput{i}} I_b$ for an arbitrary protein $P$, then $I_a \succ_{\algorithmoutput} I_b$.
\end{description}
Interestingly, those axioms imply that the external aggregation is the \emph{utilitarian} rule~\cite{DaspremontGevers1977}:
$$ I_a \succeq_{\algorithmoutput} I_b \Leftrightarrow \sum_{P \in \cP} u_P(I_a) \ge \sum_{P \in \cP} u_P(I_b) \mathrm{.} $$

In current literature \cite{energy1,energy2,estimation1}, it is usual to perform a 1-step holistic analysis of the PDB in order to infer a certain conclusion. Social choice theory can provide interesting characterizations and consequences for different levels of comparability of preferences, or whichever concept maps suitably (e.g energy scores).

\section{Discussion and Final Remarks}
\label{Sec-DiscussionFinalRemarks}

In this work we hope to have illustrated the application and relevance of social choice theory not only to the problem and methodology that intends to infer biases in the UEF, but to other problems related to protein folding where the 2-step approach makes sense. We particularly think that the 2-step approach is sensible in light of Anfinsen's thermodynamic hypothesis, as if stable folds are the result solely of a protein's individual aminoacid sequence, considering the PDB holistically may not be always desirable.

Note that the problem considered in this work is formally \emph{weaker} than those aiming for more complete characterizations of the UEF with our 2-step approach and the same axiomatic setting. Stronger characterizations of the UEF via the method above will suffer from similar methodological limitations unless axioms are relaxed, made more specific, or otherwise changed appropriately.



Many impossibility results highlighted here rely on the \emph{generality} of the axioms. Our axioms specify, for instance, that any possible individual preference is valid. While this is attractive in an axiomatic setting, it is necessary to the Arrow's Impossibility Result and implies possibly breaking transitivity in light of May's Theorem. However, those conditions \emph{can} be avoided. In some cases (such as for Corollary~\ref{corollary-arrow} and Theorem~\ref{theorem-continuity}), a suitable domain restriction, which disallows some particular individual preferences, avoid the issues (please refer to~\cite{arrovian-imposresults,Chichilnisky1982}). If preferences are \emph{single-peaked} or \emph{quasi-transitive}, we also dodge from the impossibility results above (please refer to~\cite{Sen66,Sen69}). Given the considerations here, we invite the reader to appreciate these previous results and understand their relevance for algorithm design -- not only in the context of protein folding with the thermodynamic hypothesis, but perhaps in other similar contexts where a 2-step aggregation procedure is similarly suitable. The social choice literature is vast -- and we urge the community of computer scientists to investigate these results and their possibly immense connection and relevance to problems involving protein folding.

\bibliography{/Users/hmendes/brown/Bibliography.bib}
\bibliographystyle{plain}

\end{document}